\newcommand\VRule[1][\arrayrulewidth]{\vrule width #1}
\newtheorem{lemma}{Lemma}
  \def\vhrulefill#1{\leavevmode\leaders\hrule\@height#1\hfill \kern\z@}
\begin{document}
\title{Secure Communications with Cooperative Jamming: Optimal Power Allocation and Secrecy Outage Analysis}
\author{Kanapathippillai Cumanan,~\IEEEmembership{Member, IEEE}, George C. Alexandropoulos,~\IEEEmembership{Senior Member, IEEE},\\
Zhiguo Ding,~\IEEEmembership{Senior Member, IEEE}, and George K. Karagiannidis~\IEEEmembership{Fellow, IEEE} \vspace{-0.3in}
\thanks{K. Cumanan is with the Department of Electronics, University of York, York, YO10 5DD, UK. (Email: kanapathippillai.cumanan@york.ac.uk).}
\thanks{G. C. Alexandropoulos is with the Mathematical and Algorithmic Sciences Lab, France Research Center, Huawei Technologies Co. Ltd., 92100 Boulogne-Billancourt, France (Email: george.alexandropoulos@huawei.com).}
\thanks{Z. Ding is with the School of Computing and Communications, Lancaster University Lancaster, LA1 4WA, UK. (Email: z.ding@lancaster.ac.uk).}
\thanks{G. K. Karagiannidis is with Khalifa University, Abu Dhabi, UAE and with Aristotle University of Thessaloniki, Thessaloniki, Greece. (Email: geokarag@auth.gr).}
}
\maketitle
\IEEEpeerreviewmaketitle
\begin{abstract}
This paper studies the secrecy rate maximization problem of a  secure wireless communication system, in the presence of multiple eavesdroppers. The security of the communication link is enhanced through cooperative jamming, with the help of multiple jammers. First, a feasibility condition is derived to achieve a positive secrecy rate at the destination. Then, we solve the original secrecy rate maximization problem, which is not convex in terms of power allocation at the jammers. To circumvent this non-convexity, the achievable secrecy rate is approximated for a given power allocation at the jammers and the approximated problem is formulated into a geometric programming one. Based on this approximation, an iterative algorithm has been developed to obtain the optimal power allocation at the jammers. Next, we provide a bisection approach, based on one-dimensional search, to validate the optimality of the proposed algorithm. In addition, by assuming Rayleigh fading, the secrecy outage probability (SOP) of the proposed cooperative jamming scheme is analyzed. More specifically, a single-integral form expression for SOP is derived for the most general case as well as a closed-form expression for the special case of two cooperative jammers and one eavesdropper. Simulation results have been provided to validate the convergence and the optimality of the proposed algorithm as well as the theoretical derivations of the presented SOP analysis.
\end{abstract}
\linespread{0.89}
\section{Introduction}
\indent {\Huge{P}}hysical (PHY) layer security has recently received considerable attention as a significant candidate to enhance the quality of secure communication in emerging and future wireless networks, including the fifth generation (5G) standard \cite{Renzo_Commun_Mag_J15}. In this new paradigm, the propagation characteristics of wireless channels are exploited against passive eavesdroppers and active attacks through PHY layer secret key generation and authentication schemes, while complementing the conventional cryptographic methods \cite{Zeng_Commun_Mag_J15}. The fundamental concept of information-theoretic security was first investigated in \cite{Wyner_J75} and \cite{Korner_Info_Theory_J78}, where it was shown that secure communication is feasible when the channel quality of legitimate parties is better than that of the eavesdropper. However, in practice, this is not always possible and so, the performance of PHY layer security is limited.\\
\indent In order to circumvent the performance limitations introduced by the unfavourable wireless channel conditions, cooperative jamming has been proposed as an enabler of secrecy communication\cite{Maged_J15,Duong_Comm_Lett_J14,Caijun_Commun_J15,George_KK_TVT_J15,Zou_JSAC_J13,Cuma_TVT_J14,Zheng_TVT_J15,Cuma_JSTSP_J16,Wei_Chen_Wireless_Lett_J15,Zheng_Wireless_Lett_J15}. Under this approach, jamming signals are transmitted to improve the secrecy rate performance, by introducing interference at the eavesdroppers. In \cite{Weber_Sig_Process_J11}, different secrecy rate optimization problems have been solved for relay network based on cooperative jamming, where the relays transmit noise to confound the eavesdroppers. However, these optimization problems have been considered with a total relay power constraint. For the same network, a cooperative jamming scheme has been proposed in \cite{Poor_Sig_Process_J10} with no interference leakage to the legitimate user. Furthermore, in \cite{Ding_Wireless_J11}, opportunistic cooperative jamming and relay chatting schemes have been developed, without the knowledge of eavesdropper channel state information (CSI), and the performance of these schemes have been evaluated through the secrecy outage probability (SOP) criterion. On the other hand, in \cite{Petropulu_Info_security_J13}, an uncoordinated cooperative jamming scheme with multi-antenna relays has been investigated by nulling the interference leakage at the destination and the corresponding SOP has been quantified with eavesdroppers' statistical CSI. In \cite{Gan_CJ_Sig_Process_J11}, optimal cooperative jamming scheme has been proposed with multiple relays in the presence of a single eavesdropper, where the optimal relay coefficients have been obtained through an one-dimensional search scheme.\\
\indent The SOP of a multi-user wireless communication system, that consists of multiple users who transmit to a base station, while multiple eavesdroppers attempt to tap their transmissions, has been analyzed over Rayleigh fading channels in \cite{C:Secrecy_Multiuser}. In \cite{Petropulu_SSP12}, a closed-form expression of SOP was derived for Rayleigh fading channels in a secrecy network with a multi-antenna source and a single-antenna destination in the presence of a single-antenna eavesdropper. Finally, in \cite{J:Maged_Mallik}, the SOP performance of the multiple-input multiple-output (MIMO) wiretap channel, employing transmit antenna selection and receive generalized selection combining, has been analyzed over Nakagami-$m$ fading channels.\\
\indent In this paper, we consider a PHY layer security network with single-antenna nodes, where a source-destination pair establishes secured communication, with the help of multiple jammers in the presence of multiple eavesdroppers. For this network setup, we first present a feasibility condition to achieve a positive secrecy rate at the destination. Then, the secrecy rate maximization problem is solved to determine the optimal power allocation at the jammers, which is a non-convex problem in nature. In order to overcome the non-convexity of the secrecy rate function, we approximate it for a given power allocation at the jammers and formulate the problem into a geometric programming one. Based on this approximation, an iterative algorithm is developed, by updating a better power allocation at each iteration. To validate the optimality of the presented results, we use one-dimensional search based on bisection to determine the optimal power allocation of the original secrecy rate maximization problem. Both the proposed and the one-dimensional search algorithms yield identical results, which confirms the optimality of the proposed algorithm. Moreover, the SOP of the proposed scheme is analyzed over Rayleigh fading channels. A single-integral form expression for the SOP is presented for the most general scenario, whereas a closed-form expression is derived for the special case of two cooperative jammers and one eavesdropper. Finally, numerical and simulation results have been provided to validate the theoretical derivations.\\
\indent The remainder of the paper is organized as follows. The system model and the secrecy rate maximization problem formulation are presented in Section II. A feasibility condition to achieve positive secrecy rate is provided in Section III, whereas Section IV presents an iterative approach for an approximated secrecy rate maximization problem. In Section V, the optimality of the proposed scheme is validated through one-dimensional search. The SOP analysis is derived in Section VI for Rayleigh fading channels, whereas Section VII provides numerical and simulation results to validate the performance of the proposed algorithm and the derived theoretical SOP expressions. Finally, Section VIII concludes this paper.\\
\textbf{Notations:} We use lower-case boldface letters for vectors. $(\cdot)^{T}$ and $|\cdot|$ denote the transpose of a vector and absolute value of a complex number, respectively. $[x]^{+}$ represents $\rm{max}\{x,0\}$ whereas $E\{\cdot\}$, $\rm{Pr}[\cdot]$ and $\nabla(\cdot)$ denote expectation, probability and gradient operator, respectively. The cumulative distribution function (CDF) and the probability density function (PDF) of a random variable (RV) $X$ are represented as $F_X(\cdot)$ and $f_X(\cdot)$, respectively. ${\rm Ei}(\cdot)$ is the exponential integral \cite[eq. (8.211/1)]{B:Gra_Ryz_Book}.
\begin{figure}[t]
\includegraphics[scale = 0.7]{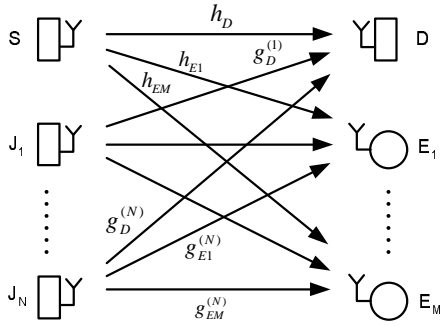}
\centering\caption{The considered secrecy network with one source, one destination and multiple jammers, in the presence of multiple eavesdroppers.}
\label{fig:Mcast_sec_net}
\end{figure}
\section{System Model}
\indent We consider a secrecy network, as shown in Fig. \ref{fig:Mcast_sec_net}, with one source, $S$, which communicates with a destination, $D$ and $N$ cooperative jammers, $J_{1},J_{2},\ldots,J_{N}$, in the presence of $M$ eavesdroppers, $E_{1},E_{2},\ldots,E_{M}$. The source $S$ wishes to transmit secured information to destination $D$. It is assumed that all network nodes are equipped with a single antenna. The channel coefficient between $S$ and $D$ is denoted by $h_{D}$, whereas $h_{Em}$ represents the channel gain between $S$ and the $m^{\rm{th}}$ eavesdropper $E_{m}$, with $m= 1,2,\ldots,M$. In addition, the channel coefficient between the $n^{\rm{th}}$ cooperative jammer $J_{n}$ and $D$ as well as $E_{m}$ are denoted by $g_{D}^{(n)}$ and $g_{Em}^{(n)}$, respectively. The CSI between all nodes are assumed to be perfectly available at $S$, $D$ and $E_{m}~\forall m$. The source $S$ transmits signals to destination $D$ whereas all jammers send interference signals to confound the eavesdroppers.\\
\indent The received signals at $D$ and $E_{m}$ can be mathematically expressed respectively, as
\begin{eqnarray}
  y_{D} &=& \sqrt{P_{s}}h_{D}x_{s}+\sum_{i=1}^{N}\sqrt{P_{i}}g_{D}^{(i)}x_{c}^{(i)}+\eta_{D}\\
  y_{Em} &=& \sqrt{P_{s}}h_{Em}x_{s}+\sum_{i=1}^{N}\sqrt{P_{i}}g_{Em}^{(i)}x_{c}^{(i)}+\eta_{Em}
\end{eqnarray}
where $x_{s}~(\mathbb{E}\{|x_{s}|^2\}= 1)$ and $x_{c}^{(i)}~(\mathbb{E}\{|x_{c}^{(i)}|^2\}=1)$ denote the transmitted signal from $S$ to $D$ and the jamming signal from the $i^{\rm{th}}$ jammer $J_{i}$, respectively. In addition, $\eta_{D}~(\mathbb{E}\{|\eta_{D}|^2\}=\sigma_{D}^{2})$ and $\eta_{Em}~(\mathbb{E}\{|\eta_{Em}|^2\}=\sigma_{Em}^{2})$ represent the noise at node $D$ and $m^{\rm{th}}$ eavesdropper $E_{m}$, respectively. The power allocation at $J_{i}$ and $S$ are denoted by $P_{i}$ and $P_{s}$, respectively. Assuming white Gaussian noise, the achievable secrecy rate at $D$ is defined as
\begin{equation}\label{eq:sec_rate}
  R_{s} = \left[\textrm{log}_{2}\left(1+\gamma_{D}\right)-\textrm{log}_{2}\left(1+\gamma_{{E}_{\rm{max}}}\right)\right]^{+}
\end{equation}
where $\gamma_{E_{\rm{max}}}=\max\left\{\gamma_{E1},\gamma_{E2},\ldots,\gamma_{EM}\right\}$  and $\gamma_{D}$, $\gamma_{Em}$ are the signal-to-interference plus noise ratios (SINR) at $D$ and $E_{m}$, respectively, given by
\begin{eqnarray}
    \gamma_{D} &=& \frac{P_{s}|h_{D}|^{2}}{\sum_{i=1}^{N}P_{i}|g_{D}^{(i)}|^{2}+\sigma_{D}^{2}}\\
    \gamma_{Em} &=& \frac{P_{s}|h_{Em}|^{2}}{\sum_{i=1}^{N}P_{i}|g_{Em}^{(i)}|^{2}+\sigma_{Em}^{2}}.
\end{eqnarray}
\indent For the secrecy network studied in this paper, we consider secrecy rate maximization with transmit power constraint. In particular, we intend to maximize the achievable secrecy rate at the destination node $D$, with the available transmit power at the source node and all $N$ available jammers. The secrecy rate maximization problem can be therefore formulated as
\begin{eqnarray}\label{eq:secrecy_rate_max_v1}
  \textsc{P}1:~~~~\max_{\mathbf{p}\succeq\mathbf{0}} && R_{s}\nonumber\\
   \textrm{s.t.} && P_{i} \leq \bar{P}_{i}, \forall i
\end{eqnarray}
where $\bar{P}_{i}$ is the maximum available transmit power at $J_{i}$ and $\mathbf{p}=[P_{1}\,P_{2}\,\cdots\,P_{N}]^{T}$.
\section{Feasibility Conditions\\ for Positive Secrecy Rate}
\indent The optimization problem $\textsc{P}1$, formulated in \eqref{eq:secrecy_rate_max_v1}, is valid or worth to solve only when it is possible to achieve a positive secrecy rate for a given set of channels and transmit powers at $D$ and $J_{i}$s. Through verifying these feasibility conditions, the source can make a decision whether to solve the secrecy rate maximization to obtain a positive secrecy rate at the destination. Hence, we first investigate the feasibility conditions. From \eqref{eq:sec_rate}, the following conditions need to be satisfied for $m=1,2,\ldots,M$:
\begin{equation}\label{feasible_cond1}
\frac{P_{s}|h_{D}|^2}{\sum_{i=1}^{N}P_{i}|g_{D}^{(i)}|^2+\sigma_{D}^{2}}> \frac{P_{s}|h_{Em}|^2}{\sum_{i=1}^{N}P_{i}|g_{Em}^{(i)}|^2+\sigma_{Em}^{2}}
\end{equation}
By arranging the terms in \eqref{feasible_cond1}, the following equality needs to hold $\forall m$:
\begin{eqnarray}
   |h_{D}|^2\!\!\left(\sum_{i=1}^{N}\!P_{i}|g_{Em}^{(i)}|^2\!+\!\sigma_{Em}^{2}\!\!\right)\!\!\!\!\!\!&>&\!\!\!\!\!\!|h_{Em}|^2\!\!\left(\sum_{i=1}^{N}\!P_{i}|g_{D}^{(i)}|^2\!+\!\sigma_{D}^{2}\!\!\right)\nonumber
\end{eqnarray}
which can be expressed as
\begin{eqnarray}\label{eq:feasible_conditions}
    \mathbf{p}^{T}\left(|h_{D}|^2\mathbf{g}_{Em}\!-\!|h_{Em}|^2\mathbf{g}_{D}\right)\!\!\!\!\!&>&\!\!\!\!\!|h_{Em}|^2\sigma_{D}^{2}\!-\!|h_{D}|^2\sigma_{Em}^{2}
\end{eqnarray}
where
\begin{eqnarray}\label{}
\mathbf{g}_{Em}&=&
\left[
  \begin{array}{cccc}
    |g_{Em}^{(1)}|^2 & |g_{Em}^{(2)}|^2 & \cdots &|g_{Em}^{(N)}|^2  \\
  \end{array}
\right]^{T}\nonumber\\
\mathbf{g}_{D} &=&\left[
                           \begin{array}{cccc}
                            |g_{D}^{(1)}|^2& |g_{D}^{(2)}|^2 & \cdots& |g_{D}^{(N)}|^2\\
                           \end{array}
                         \right]^{T}
\end{eqnarray}
The feasibility conditions given by \eqref{eq:feasible_conditions} can be formulated into the following linear programming problem\cite{boyd_B04}:
\begin{eqnarray} \label{eq:feasibility_LP}
  \min_{\mathbf{p}\succeq0}\!\!\!&&\!\!\!\!\!\mathbf{1}^{T}\mathbf{p}\nonumber\\
  \textrm{s.t.}\!\!\!&&\!\!\!\!\!\mathbf{p}^{T}\left(|h_{D}|^2\mathbf{g}_{Em}\!-\!|h_{Em}|^2\mathbf{g}_{D}\right)\!\!>\!\!|h_{Em}|^2\sigma_{D}^{2}\!-\!|h_{D}|^2\sigma_{Em}^{2},\nonumber\\
  && ~~~~~~~~~~~~~~~~~~~~~~~~~~~~~~~~~~~~~~~~~~~~~~~~~~~\forall m.
\end{eqnarray}
The above convex problem can be easily solved using existing convex optimization software \cite{boyd_B04,Ye_B97}. A positive secrecy rate can be only achieved at the destination node, if the problem in \eqref{eq:feasibility_LP} is feasible. In the following section, we solve the secrecy rate maximization problem, with the assumption that a positive secrecy rate is achievable.
\section{An Iterative Approach for the solution of the secrecy rate maximization problem}
\indent The secrecy rate maximization problem $\textsc{P}1$ given by \eqref{eq:secrecy_rate_max_v1} is non-convex due to the non-convex secrecy rate function and therefore it is challenging to obtain the optimal solution. In this section, we develop an iterative algorithm for the power allocation $\mathbf{p}$ at the jammer nodes, that is based on an approximation to the original problem $\textsc{P}1$. By reformulating \eqref{eq:secrecy_rate_max_v1} and introducing a new slack variable $\tau$, the original secrecy maximization problem $\textsc{P}1$ can be written as
\begin{eqnarray}\label{eq:secrecy_rate_max_v2}
  \textsc{P}2:~~~~ \min_{\mathbf{p}\succeq\mathbf{0},\tau \geq0} && \tau\nonumber\\
   \textrm{s.t.}&& \Gamma_{Em}(\mathbf{p}) \triangleq \frac{\Phi_{Em}(\mathbf{p})}{\Psi_{Em}(\mathbf{p})}\leq\tau,\forall m\nonumber\\ \label{eq:constr_sec_max_v2}
   && P_{i} \leq \bar{P}_{i}, \forall i.
\end{eqnarray}
where
\begin{eqnarray}\label{eq:approx_v1}
\small
   \Psi_{Em}(\mathbf{p})\!\!\!\!\!&\triangleq&\!\!\!\!\!\left(\sum_{i=1}^{N}P_{i}|g_{D}^{(i)}|^2+P_{s}|h_{D}|^2+\sigma_{D}^{2}\right)\nonumber\\
   &&\!\!\!\!\times\left(\sum_{i=1}^{N}P_{i}|g_{Em}^{(i)}|^2+\sigma_{Em}^2\right)\!\!\triangleq\!\!\sum_{k}\psi_{Em}^{(k)}
\end{eqnarray}
   and
\begin{eqnarray}
   \Phi_{Em}(\mathbf{p})\!\!\!\!&\triangleq&\!\!\!\! \left(\sum_{i=1}^{N}P_{i}|g_{Em}^{(i)}|^2+\sigma_{Em}^2+P_{s}|h_{Em}|^{2}\right)\nonumber\\
   &&\!\times\!\left(\sum_{i=1}^{N}P_{i}|g_{D}^{(i)}|^2\!+\!\sigma_{D}^{2}\right).
\end{eqnarray}
\indent In \eqref{eq:approx_v1}, $\psi_{Em}^{(k)}$ represents the individual term in the summation, obtained by expanding function $\Psi_{Em}(\mathbf{p})$. The constraint in \eqref{eq:constr_sec_max_v2} is a quadratic fractional non-convex function. However, the problem in \eqref{eq:secrecy_rate_max_v2} can be converted into a series of geometric programming problems by exploiting the \emph{single condensation method} \cite{Julian_Wireless_Commun_J07}. A fractional constraint with a posynomial numerator and a monomial denominator is convex. The idea of approximating the denominator posynomial with a monomial was presented in \cite{boyd_B04} in order to convert the aforementioned constraint to a convex one. We hereinafter adopt this idea and we approximate $\Psi_{Em}(\mathbf{p})$ (i.e., denominator of the constraint in \eqref{eq:constr_sec_max_v2}) to the best monomial, for a given set of $\mathbf{p}$. The following lemma is required:

\begin{lemma}\label{lemma:Mono_approx}
For a posynomial $g(\mathbf{x})$, the following inequality holds:
\begin{equation}\label{eq:equality_approx}
    g(\mathbf{x}) = \sum_{k=1}^{K}w_{k}(\mathbf{x}) \geq \hat{g}(\mathbf{\hat{x}}) = \prod_{k=1}^{K}\left[\frac{w_{k}(\mathbf{x})}{a_{k}}\right]^{a_{k}}
\end{equation}
where $a_{k}> 0$ and $\sum_{k=1}^{K}a_{k}=1$. Notation $\hat{g}(\mathbf{\hat{x}})$ represents the best approximation of $g(\mathbf{\hat{x}})$ at $\mathbf{\hat{x}}$ with $a_{k} = w_{k}(\mathbf{\hat{x}})/g(\mathbf{\hat{x}})$, and the inequality in \eqref{eq:equality_approx} holds with an equality at this point.
\end{lemma}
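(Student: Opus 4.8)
The plan is to recognize the claimed inequality as a direct instance of the weighted arithmetic--geometric mean (AM--GM) inequality, and then to verify the tightness (``best approximation'') claim by a tangency argument at the expansion point.

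First I would recall the weighted AM--GM inequality: for positive reals $u_{1},\dots,u_{K}$ and weights $a_{1},\dots,a_{K}$ with $a_{k}>0$ and $\sum_{k=1}^{K}a_{k}=1$, one has $\sum_{k=1}^{K}a_{k}u_{k}\geq\prod_{k=1}^{K}u_{k}^{a_{k}}$. Setting $u_{k}=w_{k}(\mathbf{x})/a_{k}$, the left-hand side collapses to $\sum_{k=1}^{K}w_{k}(\mathbf{x})=g(\mathbf{x})$, while the right-hand side becomes $\prod_{k=1}^{K}\left[w_{k}(\mathbf{x})/a_{k}\right]^{a_{k}}=\hat{g}$, which is precisely the inequality in \eqref{eq:equality_approx}. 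Since each $w_{k}(\mathbf{x})$ is a monomial, hence strictly positive on the positive orthant, the substitution is valid and the bound holds for every admissible choice of weights.

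Next I would check that the specific choice $a_{k}=w_{k}(\hat{\mathbf{x}})/g(\hat{\mathbf{x}})$ is feasible and produces equality at $\hat{\mathbf{x}}$. Feasibility is immediate, since $\sum_{k}a_{k}=\sum_{k}w_{k}(\hat{\mathbf{x}})/g(\hat{\mathbf{x}})=1$ and each $a_{k}>0$. Evaluating at $\mathbf{x}=\hat{\mathbf{x}}$ gives $u_{k}=w_{k}(\hat{\mathbf{x}})/a_{k}=g(\hat{\mathbf{x}})$ for all $k$; because the $u_{k}$ all coincide, AM--GM is tight and $g(\hat{\mathbf{x}})=\hat{g}(\hat{\mathbf{x}})$, establishing the stated equality.

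Finally, to justify the word \emph{best}, I would show that $\hat{g}$ is tangent to $g$ at $\hat{\mathbf{x}}$, i.e. it matches not only the value but also the gradient. Differentiating $\log\hat{g}(\mathbf{x})=\sum_{k}a_{k}\log w_{k}(\mathbf{x})-\sum_{k}a_{k}\log a_{k}$ yields $\nabla\hat{g}/\hat{g}=\sum_{k}a_{k}\nabla w_{k}/w_{k}$; substituting $a_{k}=w_{k}(\hat{\mathbf{x}})/g(\hat{\mathbf{x}})$ together with $\hat{g}(\hat{\mathbf{x}})=g(\hat{\mathbf{x}})$ telescopes this to $\nabla\hat{g}(\hat{\mathbf{x}})=\sum_{k}\nabla w_{k}(\hat{\mathbf{x}})=\nabla g(\hat{\mathbf{x}})$. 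Thus the chosen monomial is the unique monomial agreeing with $g$ to first order at $\hat{\mathbf{x}}$, which is the precise sense in which it is the best local approximation and what makes the single condensation step a legitimate first-order surrogate. The only genuine subtlety --- and the step I expect to require the most care --- is pinning down what ``best'' means: the inequality itself is routine AM--GM, so the substance of the lemma lies in confirming both the equality at $\hat{\mathbf{x}}$ and the gradient matching that underlies the condensation method.
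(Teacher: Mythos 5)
Your proof is correct and follows essentially the same route as the paper's: weighted AM--GM applied to $u_k = w_k(\mathbf{x})/a_k$, followed by the observation that the choice $a_k = w_k(\hat{\mathbf{x}})/g(\hat{\mathbf{x}})$ makes all the $u_k$ equal to $g(\hat{\mathbf{x}})$ at the expansion point, forcing equality. Your additional gradient-matching (tangency) argument goes beyond the paper's Appendix~A proof, but it reproduces exactly the computation the paper carries out separately in Section~IV when verifying the third KKT condition, so it is a consolidation rather than a different approach.
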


\begin{proof}
The proof is provided in Appendix~\ref{App:Lemma1}.
\end{proof}

Based on Lemma \ref{lemma:Mono_approx}, the denominator polynomial function $\Psi_{Em}(\mathbf{p})$ in \eqref{eq:constr_sec_max_v2}, can be approximated as $\hat{\Psi}_{Em}(\mathbf{p})$
\begin{equation}\label{eq:approx_cal}
  \Psi_{Em}(\mathbf{p}) \approx \hat{\Psi}_{Em}(\mathbf{p}) \triangleq \prod_{k=1}^{K}\left[\frac{\psi_{Em}^{(k)}}{\alpha_{k}^{(m)}}\right]^{\alpha_{k}^{(m)}}
\end{equation}
where
\begin{equation}\label{eq:alpha_cal}
  \alpha_{k}^{(m)} \triangleq  \frac{\psi_{Em}^{(k)}}{\hat{\Psi}_{Em}(\mathbf{p})}~~\forall k.
\end{equation}
\indent Using the approximation given by \eqref{eq:approx_cal}, the problem $\textsc{P}2$ can be reformulated for a given set of power allocation $\mathbf{p}$ as
\begin{eqnarray}\label{eq:secrecy_rate_max_v3}
  \textsc{P}3:~~ \min_{\mathbf{p}\succeq\mathbf{0},\tau \geq0} && \tau\nonumber\\
   \textrm{s.t.}&& \hat{\Gamma}_{Em}(\mathbf{p}) \triangleq \frac{\Phi_{Em}(\mathbf{p})}{\hat{\Psi}_{Em}(\mathbf{p})}\leq\tau,~\forall m,\nonumber\\ \label{eq:constr_sec_max_v3}
   && P_{i} \leq \bar{P}_{i}, \forall i.
\end{eqnarray}
\indent The above optimization problem $\textsc{P}3$, which is an approximation of the original $\textsc{P}1$, can be now formulated into a standard geometric programming one. The iterative algorithm A is developed for $\textsc{P}3$, where the power allocation $\mathbf{p}$ is updated at each iteration.
{\renewcommand\baselinestretch{1}\selectfont
\begin{figure}[t]
    \hspace{0em}\hrulefill

 \hspace{0em} {\bf Algorithm A:} Secrecy Rate Maximization

     \hspace{0em}\hrulefill

     \hspace{0em} Step 1: Initialization of power allocation vector $\mathbf{p}$

     \hspace{0em} Step 2: Repeat
\begin{enumerate}
   \item Calculate $\Psi_{Em}(\mathbf{p}),~\forall m$ using \eqref{eq:approx_v1}.
   \item Calculate $\alpha_{k}^{(m)},~\forall k,~m$ using \eqref{eq:alpha_cal}.
   \item Determine $\hat{\Psi}_{Em}(\mathbf{p}),~\forall m$ by using \eqref{eq:approx_cal}.
   \item Solve the standard geometric programming problem in \eqref{eq:secrecy_rate_max_v3}.
\end{enumerate}
\hspace{-0em} Step 3: Until required accuracy is achieved or the maximum number of iterations is reached.

\hspace{-1em} \hrulefill
\end{figure}\par}
The solution of the proposed Algorithm A satisfies the Karush-Kuhn-Tucker (KKT) conditions. This can be validated by proving the following three conditions \cite{Wright1978convex_approx}:
\begin{enumerate}\label{KKT_codtn}
\item $\Gamma_{Em}(\mathbf{p})\leq \hat{\Gamma}_{Em}(\mathbf{p}),~\forall~m,\mathbf{p}$, where $\Gamma_{Em}(\mathbf{p}) = \frac{\Phi_{Em}(\mathbf{p})}{\Psi_{Em}(\mathbf{p})}$.
\item $\Gamma_{Em}(\mathbf{\tilde{p}}) = \hat{\Gamma}_{Em}(\mathbf{\tilde{p}}),~\forall~m,$ where $\mathbf{\tilde{p}}$ denotes the power allocation obtained from the previous iteration of Algorithm A.
\item  $\nabla\Gamma_{Em}(\mathbf{\tilde{p}}) =\nabla\hat{\Gamma}_{Em}(\mathbf{\tilde{p}}),\forall~m$.
\end{enumerate}
The first condition holds due to the fact that $\Psi_{Em}(\mathbf{p})\leq\hat{\Psi}_{Em}(\mathbf{p})$, which is true from Lemma \ref{lemma:Mono_approx}. In addition, the second condition is satisfied from the equality condition in Lemma \ref{lemma:Mono_approx}. The third condition can be validated through proving $\nabla\hat{\Psi}_{Em}(\mathbf{\tilde{p}})=\nabla\Psi_{Em}(\mathbf{\tilde{p}})$ for all $m$:
\begin{eqnarray}
  \nabla\hat{\Psi}_{Em}(\mathbf{\tilde{p}}\!)\!\!\!\!\!&=&\!\!\!\!\!\left[\!\!\frac{\partial\hat{\Psi}_{Em}(\mathbf{\tilde{p}})}{\partial P_{1}}\bigg|_{\tilde{P}_{1}}\!\!\!\frac{\partial\hat{\Psi}_{Em}(\mathbf{\tilde{p}})}{\partial P_{2}}\bigg|_{\tilde{P}_{2}}\!\!\!\!\!\cdots\!\frac{\partial\hat{\Psi}_{Em}(\mathbf{\tilde{p}})}{\partial P_{N}}\bigg|_{\tilde{P}_{N}}\!\right]\!,\nonumber\\
  &&~~~~~~~~~~~~~~~~~~~~~~~~~~~~~~~~~~~~~~~\forall m, \label{eq:partial_diff}\\
   \frac{\partial\hat{\Psi}_{Em}}{\partial P_{1}}\bigg|_{P_{1}=\tilde{P}_{1}}\!\!\!\!&=&\!\!\!\!\prod_{k}^{}\left[\frac{\psi_{Em}^{(k)}}{\alpha_{k}^{(m)}}\right]^{\alpha_{k}^{(m)}}\left[\frac{\sum_{k}\rho_{Em}^{(k)}}{P_{1}\hat{\Psi}_{Em}(\mathbf{\tilde{p}})}\right]\nonumber\\
   \!\!\!\!&=&\!\!\!\!\left[\hat{\Psi}_{Em}(\mathbf{\tilde{p}})\right]^{\sum_{k}\alpha_{k}^{(m)}}\frac{\sum_{k}\rho_{Em}^{(k)}}{P_{1}\hat{\Psi}_{Em}(\mathbf{\tilde{p}})}\nonumber\\
   \!\!\!\!&=&\!\!\!\!\frac{\sum_{k}\rho_{Em}^{(k)}}{P_{1}} = \frac{\partial\Psi_{Em}}{\partial P_{1}}\bigg|_{P_{1}=\tilde{P}_{1}}
\end{eqnarray}
where $\rho_{Em}^{(k)}$ are the differentiated $\psi_{Em}^{(k)}$'s with respect to $P_{1}$. Similarly, the rest of the partial derivatives in \eqref{eq:partial_diff} can be derived and it can be easily proved to be equal to the partial derivatives of $\Psi_{Em}(\mathbf{\tilde{p}})$, with respect to the corresponding power allocation. Hence, the power allocation obtained through Algorithm A satisfies the KKT conditions of the original optimization problem $\textsc{P}1$. However, it is difficult to analytically prove global optimality. In addition, the geometric programming in Algorithm A can be solved with polynomial time complexity. In order to validate the convergence of the proposed algorithm, simulation results will be provided in Section VII for different sets of wireless channels.
\subsection{Convergence Analysis}
\indent The approximated secrecy rate maximization problem $\textsc{P}3$ given by \eqref{eq:secrecy_rate_max_v3} is convex, and the optimal power allocation $\mathbf{p}^{*}$ can be obtained by solving \eqref{eq:secrecy_rate_max_v3} for a given set of power allocation $\mathbf{\tilde{p}}$. At each iteration, the power allocation $\mathbf{\tilde{p}}$ is updated from the optimal solution $\mathbf{p}^{*}$ determined through the previous iteration. Hence, $\mathbf{\tilde{p}}$ is always a feasible solution of the next iteration, and the optimal power allocation $\mathbf{p}^{*}$ obtained for a given $\mathbf{\tilde{p}}$ will achieve a secrecy rate, which is greater than or equal to that of the previous iteration. This reveals that the achieved secrecy rate will monotonically increase at each iteration, which can be also observed from the simulation results, presented in Fig. \ref{fig:algo_convergence}. Since, the achievable secrecy rate is upper bounded for a given transmit power at the jammers, this algorithm will converge to a solution. Fortunately, the proposed Algorithm A converges to the optimal solution, which is validated through an one-dimensional search, based on bisection and provided in the following section.
\section{Optimality Validation of the secrecy rate maximization algorithm}
\indent In this section, we present an one-dimensional search approach to validate the optimality of the proposed algorithm A. The concept behind this approach is to fix the received total interference power at the destination node and find the optimal power allocation at the jammers\cite{Eldar_Signal_Process_Lett_J06,Gan_CJ_Sig_Process_J11}. The secrecy rate maximization problem $\textsc{P}1$ can be formulated into the following max-min one:
\begin{eqnarray}
 \textsc{P}4:R^{*}=\max_{\mathbf{p}}\min_{t_{i}}~\left(t_{1},t_{2},\ldots,t_{M}\right)&&\nonumber\\
   \textrm{s.t.}~\log_{2}\left(\!\!\frac{1\!+\!\frac{P_{s}|h_{D}|^2}{\sum_{i=1}^{N}P_{i}|g_{D}^{(i)}|^2+\sigma_{D}^{2}}}{1\!+\!\frac{P_{s}|h_{Em}|^2}{\sum_{i=1}^{N}P_{i}|g_{Em}^{(1)}|^{2}+\sigma_{Em}^{2}}}\!\!\right)\!\!\!&\geq&\!\!\!t_{m},\forall m \nonumber\\
   P_{i}\!&\leq&\!\bar{P}{i},\forall i\label{eq:quasi_cvx1}
\end{eqnarray}
where $R^{*}$ is the optimal achieved secrecy rate. By fixing the total received interference (i.e., $\sum_{i=1}^{N}P_{i}|g_{D}^{(i)}|^2$) at the destination to a particular value $t_{0}$, the following subproblem can be formulated as:
\begin{eqnarray}
  \textsc{P}5:~~q^{*}=\max_{\mathbf{p},t}\!\!\!\!&&\!\!\!\! t \nonumber\\
  \textrm{s.t.}\!\!\!\!&& \sum_{i=1}^{N}P_{i}|g_{D}^{(i)}|^2 = t_{0},\nonumber\\
  \!\!\!\!&&\!\!\!\!R_{Em}(t_{0})\!=\!\frac{1+\frac{P_{s}|h_{D}|^2}{t_{0}+\sigma_{D}^{2}}}{1+\frac{P_{s}|h_{Em}|^2}{f_{m}(t_{0})+\sigma_{Em}^{2}}}\!\geq\! t,\forall m\nonumber\\
   \!\!\!&&\!\!\!P_{i} \leq \bar{P}{i},\forall i\label{eq:quasi_cvx2}
\end{eqnarray}
where $f_{m}(t_{0})=\sum_{i=1}^{N}P_{i}|g_{Em}^{(1)}|^{2}$. Next we show that the problem in \eqref{eq:quasi_cvx2} is quasi-convex in terms of $t_{0}$, and therefore, the optimal $t_{0}$ can be obtained through one-dimensional search.\\
\begin{lemma}
$R_{Em}(t_{0})$ is a quasi-concave function in terms of $t_{0}$.
\end{lemma}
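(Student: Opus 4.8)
The plan is to use the scalar characterisation of quasi-concavity: a function of the single variable $t_0$ is quasi-concave on its domain if and only if every upper level set $\mathcal{S}_\gamma=\{t_0:R_{Em}(t_0)\ge\gamma\}$ is convex, i.e. an interval; equivalently, $R_{Em}$ is monotone or single-peaked over the admissible range of $t_0$. I would therefore fix an arbitrary threshold $\gamma$ and prove that $\mathcal{S}_\gamma$ is an interval, which then yields the claim for all $\gamma$ simultaneously.

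First I would make the dependence of the denominator on $t_0$ explicit. Once the total interference at $D$ is pinned to $\sum_{i}P_i|g_D^{(i)}|^2=t_0$, the numerator $1+P_s|h_D|^2/(t_0+\sigma_D^2)$ is completely determined by $t_0$, while the interference seen by $E_m$, namely $f_m(t_0)=\sum_{i}P_i|g_{Em}^{(i)}|^2$, is affine in $t_0$ in the single-jammer case and, after the inner maximisation over $\mathbf{p}$, a concave nondecreasing function with $f_m(0)=0$ in general. Substituting this and clearing the strictly positive factors $(t_0+\sigma_D^2)$ and $(f_m(t_0)+\sigma_{Em}^2)$ turns $R_{Em}(t_0)\ge\gamma$, in the affine case, into a quadratic inequality $A\,t_0^2+B\,t_0+C\ge0$ whose coefficients I would read off in terms of $|h_D|^2$, $|h_{Em}|^2$, $\sigma_D^2$, $\sigma_{Em}^2$ and the jammer gain ratio carried by $f_m$.

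The hard part will be showing that $\mathcal{S}_\gamma$ is a single interval rather than the complement of one: this is not automatic, because $R_{Em}$ is a ratio of two terms that both decrease in $t_0$, and if jamming were counterproductive (the aggregate jammer-to-eavesdropper gain too weak relative to the jammer-to-destination gain) the ratio would develop an interior local minimum and fail to be quasi-concave. I would resolve this by invoking the standing assumption of Section~III that a positive secrecy rate is achievable, i.e. that $\gamma_D>\gamma_{Em}$ is feasible, together with the compactness of the admissible range $t_0\in[0,\sum_{i}\bar{P}_i|g_D^{(i)}|^2]$ induced by the per-jammer power limits; these jointly constrain the gain ratios and the interval over which $R_{Em}$ is examined. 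Concretely, I would show that under these conditions the leading behaviour of $A\,t_0^2+B\,t_0+C$ precludes the down-then-up sign pattern on the admissible interval, so that $R'_{Em}$ changes sign at most once and only from $+$ to $-$; hence each $\mathcal{S}_\gamma$ is an interval and $R_{Em}(t_0)$ is quasi-concave. An equivalent route, which I would use as a cross-check, is to differentiate $R_{Em}$ directly, observe that the sign of $R'_{Em}(t_0)$ coincides with that of the same quadratic, and rule out an interior $-\!\to\!+$ transition via the feasibility condition.
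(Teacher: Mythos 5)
You take a genuinely different route from the paper: the paper's (one-line) proof differentiates twice and asserts that ${\rm d}^2R_{Em}/{\rm d}t_0^2<0$ for all $t_0>0$, i.e., it claims outright concavity and cites a reference, whereas you argue through upper level sets and the sign pattern of $R_{Em}'$, reduced to a quadratic in $t_0$. Your reduction is sound, and to your credit you isolate exactly the crux that the paper glosses over: both the numerator and the denominator of $R_{Em}$ decrease in $t_0$, so quasi-concavity is genuinely at risk when the jammers hurt the destination relatively more than the eavesdropper.

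The genuine gap is in your proposed way of closing this: the Section III feasibility condition together with compactness of the admissible range does \emph{not} preclude the $-\to+$ sign change of $R_{Em}'$. Concretely, take a single jammer ($N=1$, so the inner maximization is vacuous and $f_1(t_0)=ct_0$ with $c=|g_{E1}^{(1)}|^2/|g_D^{(1)}|^2$), and set $P_s|h_D|^2=10$, $P_s|h_{E1}|^2=1$, $\sigma_D^2=\sigma_{E1}^2=1$, $c=0.05$, with $\bar{P}_1$ large. The feasibility problem \eqref{eq:feasibility_LP} is satisfied trivially at $\mathbf{p}=\mathbf{0}$ (the destination's direct SNR already exceeds the eavesdropper's), yet
\[
R_{E1}(t_0)=\frac{1+\tfrac{10}{t_0+1}}{1+\tfrac{1}{0.05\,t_0+1}}
\]
falls from $5.5$ at $t_0=0$ to roughly $0.93$ at $t_0\approx 51$ and then rises monotonically back toward $1$, so every upper level set at a level in $(0.93,1)$ reached by the admissible interval is a union of two disjoint intervals: not quasi-concave, and restricting to a compact interval does not help once it extends past the dip. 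In your quadratic notation, the sign of $R_{E1}'$ equals the sign of $At_0^2+Bt_0+C$ with $A=c\left(P_s|h_{E1}|^2-cP_s|h_D|^2\right)$ and $C=cP_s|h_{E1}|^2\sigma_D^2\left(\sigma_D^2+P_s|h_D|^2\right)-P_s|h_D|^2\sigma_{E1}^2\left(\sigma_{E1}^2+P_s|h_{E1}|^2\right)$; in the weak-jamming regime $cP_s|h_D|^2<P_s|h_{E1}|^2$ one gets $A>0$ with $C<0$, which is precisely the pattern you needed to exclude, and feasibility cannot exclude it because it may hold already at $\mathbf{p}=\mathbf{0}$ and carries no information about the jammer gain ratio. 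What actually rules out the bad pattern is the strictly stronger per-link condition $|h_D|^2|g_{E1}^{(1)}|^2>|h_{E1}|^2|g_D^{(1)}|^2$ (equivalently $A<0$), under which one can check that the quadratic admits only the sign patterns $+\to-$ or all $-$ on $[0,\infty)$, and your level-set argument then goes through; your proof therefore needs this as an explicit additional hypothesis. Note also that you cannot appeal to the paper's own argument to fill the hole: in the counterexample above the function has an interior local minimum, where its second derivative is nonnegative, so the paper's claim that the second derivative is negative for every $t_0>0$ fails on the same instance.
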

\begin{proof}
This can be proved by finding the second derivative of $R_{Em}(t_{0})$ with respect to $t_{0}$ and easily provided that it is negative for any $t_{0}>0$\cite{Eldar_Signal_Process_Lett_J06}.
\end{proof}
\vspace{0.1in}
\hspace{0.1in} In addition, the point-wise infimum of a set of quasi-concave functions is quasi-concave\cite{boyd_B04}. Therefore, the problem $\textsc{P}5$ given by \eqref{eq:quasi_cvx2} is quasi-convex and the optimal power allocation at the jammers can be obtained through Algorithm B.
{\renewcommand\baselinestretch{1}\selectfont
\begin{figure}[t]
    \hspace{0em}\hrulefill

 \hspace{0em} {\bf Algorithm B:} One-Dimensional Search Based on Bisection

     \hspace{0em}\hrulefill

     \hspace{0em} Step 1: Initialize $t_{0}^{(\rm{min})}, t_{0}^{(\rm{max})}$ and $\epsilon$

     \hspace{0em} Step 2: Solve the problem in $\textsc{P}5$ given by \eqref{eq:quasi_cvx2} with $t_{0} = \frac{t_{0}^{(\rm{min})}+3t_{0}^{(\rm{max})}}{4}$.

     \hspace{0em} Step 3: Set $t^{*} = t$.

     \hspace{0em} Step 4: Repeat
\begin{enumerate}
   \item $t_{0} = \frac{t_{0}^{(\rm{min})}+t_{0}^{(\rm{max})}}{2}$.
   \item Solve the problem $\textsc{P}5$ given by \eqref{eq:quasi_cvx2} and obtain the value of $t$
   \item If $t^{*} > t$
   \item \hspace{1em} $t_{0}^{(\rm{min})} = \frac{t_{0}^{(\rm{min})}+t_{0}^{(\rm{max})}}{2} $
   \item else
   \item \hspace{1em} $t_{0}^{(\rm{max})} = \frac{t_{0}^{(\rm{min})}+t_{0}^{(\rm{max})}}{2} $
   \item end
\end{enumerate}
\hspace{-0em} Step 5: Repeat until $t_{0}^{(\rm{max})}-t_{0}^{(\rm{min})}\geq \epsilon$.

\hspace{-1em} \hrulefill
\end{figure}\par}
\section{SOP Analysis over Rayleigh Fading Channels}\label{Sec:SOP_Analysis}
In this section, we analyze the SOP performance of the proposed cooperative jamming scheme over Rayleigh fading channels. In particular, for the system model presented in Sec$.$~II, we assume that $h_{\rm D}$ as well as $g_{\rm D}^{(n)}$ $\forall$ $n=1,2,\ldots,N$ and $\gamma_{{\rm E}_i}$ $\forall$ $i=1,2,\ldots,M$ are standard circularly-symmetric complex Gaussian RVs.

By using the SOP definition of \cite{C:Barros_Secrecy_2006}, the SOP of the proposed cooperative jamming scheme can be obtained as
\begin{equation}\label{Eq:SOP_Definition}
\begin{split}
{\rm P_{out}} =& {\rm Pr}\left[\log_2\frac{\gamma_{\rm D}+1}{\gamma_{\rm E_{max}}+1}<\mathcal{R}\Big| \gamma_{\rm D}>\gamma_{\rm E_{max}}\right]
\\&\times{\rm Pr}\left[\gamma_{\rm D}>\gamma_{\rm E_{max}}\right]+{\rm Pr}\left[\gamma_{\rm D}\leq\gamma_{\rm E_{max}}\right]
\end{split}
\end{equation}
where $\mathcal{R}$ denotes the rate in bits per second (bps) per Hertz. With the utilization of the auxiliary positive real parameter $\mu\triangleq2^\mathcal{R}$ and the negative real parameter $\nu\triangleq2^{-\mathcal{R}}-1$, \eqref{Eq:SOP_Definition} can be rewritten, as shown in Appendix~\ref{App:Derivation_SOP}, as
\begin{equation}\label{Eq:SOP_Elaborations}
\begin{split}
{\rm P_{out}} &= 1-{\rm Pr}\left[\gamma_{\rm E_{max}}<\frac{\gamma_{\rm D}}{\mu}+\nu\right]
\\&= 1 - \mu\int_0^\infty F_{\gamma_{\rm E_{max}}}(x)f_{\gamma_{\rm D}}(\mu x-\mu\nu){\rm d}x.
\end{split}
\end{equation}
\indent In order to solve the integral in \eqref{Eq:SOP_Elaborations}, we first derive a closed-form expression for the PDF of $\gamma_{\rm D}$ as follows. Since $z\triangleq P_s|h_{\rm D}|^2$ is an exponentially distributed RV and $y\triangleq \sum_{n=1}^N P_n|g_{{\rm E}_i}^{(n)}|^2$ is a generalized chi-squared one, by obtaining the CDF of $z$ and the PDF of $y$ by easily  integrating \cite[eq. (2.7)]{B:Sim_Alou_Book} and from \cite[eq. (19)]{J:Mats_GenSumChi} for distinct $P_n$'s, it can be shown that the CDF of $\gamma_{\rm D}$ is given by
\begin{equation}\label{Eq:CDF_gamma_D}
\begin{split}
&F_{\gamma_{\rm D}}(x) = \int_{\sigma_{\rm D}^2}^\infty F_{z}(xw)f_{y}\left(w-\sigma_{\rm D}^2\right){\rm d}w
\\&= 1-\sum_{n=1}^{N}\mathcal{A}_n\exp\left(\frac{\sigma_{\rm D}^2}{P_n}\right)\int_{\sigma_{\rm D}^2}^\infty \exp\left[-\left(\frac{x}{P_s}+\frac{1}{P_n}\right)w\right]{\rm d}w
\\&\stackrel{(a)}{=} 1-P_s\exp\left(-\frac{\sigma_{\rm D}^2x}{P_s}\right)\sum_{n=1}^{N}\frac{\mathcal{A}_nP_n}{P_nx+P_s}
\end{split}
\end{equation}
where $(a)$ follows after using \cite[eq. (3.381/3)]{B:Gra_Ryz_Book} and the definition
\begin{equation}\label{Eq:A_i}
A_n \triangleq \left[P_n\prod_{j=1,j\neq n}^{N}\left(1-\frac{P_j}{P_n}\right)\right]^{-1}.
\end{equation}
By differentiating \eqref{Eq:CDF_gamma_D}, the PDF of $\gamma_{\rm D}$ is easily derived as
\begin{equation}\label{Eq:PDF_gamma_D}
\begin{split}
f_{\gamma_{\rm D}}(x) =& \exp\left(-\frac{\sigma_{\rm D}^2x}{P_s}\right)
\\&\times\sum_{n=1}^{N}\mathcal{A}_nP_n\left[\frac{\sigma_{\rm D}^2}{P_nx+P_s}+\frac{P_sP_n}{\left(P_nx+P_s\right)^2}\right].
\end{split}
\end{equation}
A closed-form expression for the CDF of $\gamma_{\rm E_{max}}$ can be easily obtained using the marginal CDFs of $\gamma_{{\rm E}_i}$ $\forall$ $i$ and the fact that these RVs are independent. In particular, the latter CDFs are derived in closed form similar to the CDF of $\gamma_{\rm D}$ and each is given by \eqref{Eq:CDF_gamma_D} after substituting $\sigma_{\rm D}^2$ with $\sigma_{{\rm E}_i}^2$. Hence, the CDF of $\gamma_{\rm E_{max}}$ can be expressed as
\begin{equation}\label{Eq:CDF_gamma_E_max}
F_{\gamma_{\rm E_{max}}}(x) = \prod_{i=1}^M\left[1-P_s\exp\left(-\frac{\sigma_{{\rm E}_i}^2x}{P_s}\right)\sum_{n=1}^{N}\frac{\mathcal{A}_nP_n}{P_nx+P_s}\right].
\end{equation}

By substituting \eqref{Eq:PDF_gamma_D} and \eqref{Eq:CDF_gamma_E_max} into \eqref{Eq:SOP_Elaborations}, an analytical expression in the form of a single integral for the SOP of the proposed PHY-layer security scheme can be obtained as
\begin{equation}\label{Eq:SOP_Integral_Form}
{\rm P_{out}} = 1 - \mu\exp\left(\frac{\sigma_{\rm D}^2\mu\nu}{P_s}\right)Y
\end{equation}
where integral $Y$ is given by
\begin{equation}\label{Eq:Y_integral}
\begin{split}
Y =& \int_0^\infty \left\{\prod_{i=1}^M\left[1-P_s\exp\left(-\frac{\sigma_{{\rm E}_i}^2x}{P_s}\right)\sum_{n=1}^{N}\frac{\mathcal{A}_n}{x+\lambda_n}\right]\right\}
\\&\times\exp\left(-\xi x\right)
\sum_{n=1}^{N}\frac{\mathcal{A}_n}{\mu}\left[\frac{\sigma_{\rm D}^2}{x-\kappa_n}+\frac{P_s}{\mu\left(x-\kappa_n\right)^2}\right]{\rm d}x
\end{split}
\end{equation}
with $\xi\triangleq P_s^{-1}\sigma_{\rm D}^2\mu$ as well as, for $n=1,2,\ldots,N$, $\kappa_n\triangleq P_s/\left(\mu P_n\right)-\nu$ and $\lambda_n\triangleq P_s/P_n$. By using the closed-form solution for $Y$ included in Appendix~\ref{App:Theorem_1}, a closed-form expression for the SOP of the proposed scheme for arbitrary positive integer values of $N$ and $M$ is given by
\begin{equation}\label{Eq:SOP_Closed_Form_General}
\begin{split}
&{\rm P_{out}} = 1 - \mu\exp\left(\xi\nu\right)\left\{\sum_{n=1}^{N}\frac{\mathcal{A}_n}{\mu}\left[\sigma_{{\rm D}}^2I_{1,0}\left(\xi,\kappa_n,0\right)\right.\right.
\\&\left.\left.+\frac{P_s}{\mu}I_{2,0}\left(\xi,\kappa_n,0\right)\right]+\sum_{\{\alpha_i\}_{i=1}^M}P_s^i \sum_{k_1+k_2+\cdots+k_N=i}\frac{i!}{\prod_{n=1}^{N}k_n!}\right.
\\&\left.\times\left(\prod_{t=1}^{N}\mathcal{A}_t^{k_t}\right)\sum_{n=1}^{N}\frac{\mathcal{A}_n}{\mu}\left[\sigma_{{\rm D}}^2 I_{1,\{k_n\}_{n=1}^N}\left(\psi_i,\kappa_n,\{\lambda_n\}_{n=1}^N\right)\right.\right.
\\&\left.\left.+\frac{P_s}{\mu}I_{2,\{k_n\}_{n=1}^N}\left(\psi_i,\kappa_n,\{\lambda_n\}_{n=1}^N\right)\right]\right\}
\end{split}
\end{equation}
where symbol $\sum_{\{\alpha_i\}_{i=1}^M}$ is used for short-hand representation of the multiple summation $\sum_{i=1}^M\sum_{\alpha_1=1}^{M-i+1}\sum_{\alpha_2=\alpha_1+1}^{M-i+2}\cdots\sum_{\alpha_i=\alpha_{i-1}+1}^{M}$ and the sum $\sum_{k_1+k_2+\cdots+k_N=i}$ is taken over all combinations of nonnegative integer indices $k_1$ through $k_N$ such that the sum of all $k_n$ is $i$. Moreover, $I_{\ell,\{k_n\}_{n=1}^N}\left(\alpha_1,\alpha_2,\left\{\alpha_{3,n}\right\}_{n=1}^{N}\right)$ is given by \eqref{Eq:Basic_Integral_Solution} for $\ell=1,2$ as well as for $k_n$ being positive integer and $\alpha_1$, $\alpha_2$, $\alpha_{3,n} $ $\in\mathbb{R}_+^*$ $\forall$~$n=1,2,\ldots,N$. As an example, for the special case of $N=2$ and $M=1$, the latter SOP expression simplifies to
\begin{equation}\label{Eq:SOP_Closed_Form_M1N2}
\begin{split}
&{\rm P_{out}} = 1 - \mu\exp\left(\xi\nu\right)
\\&\times\left\{\sum_{n=1}^2\frac{\mathcal{A}_n}{\mu}\left[\sigma_{{\rm D}}^2I_{1,0}\left(\xi,\kappa_n,0\right)+\frac{P_s}{\mu}I_{2,0}\left(\xi,\kappa_n,0\right)\right]\right.
\\&-\left.\sum_{n=1}^2\frac{P_s\mathcal{A}_n^2}{\mu}\left[\sigma_{{\rm D}}^2I_{1,1}\left(\psi,\kappa_n,\lambda_n\right)+\frac{P_s}{\mu}I_{2,1}\left(\psi,\kappa_n,\lambda_n\right)\right]\right.
\\&-\left.\frac{P_s\mathcal{A}_1\mathcal{A}_2\sigma_{\rm D}^2}{\mu}\left[I_{1,1}\left(\psi,\kappa_1,\lambda_2\right)+I_{1,1}\left(\psi,\kappa_2,\lambda_1\right)\right]\right.
\\&-\left.\frac{P_s^2\mathcal{A}_1\mathcal{A}_2}{\mu^2}\left[I_{2,1}\left(\psi,\kappa_1,\lambda_2\right)+I_{2,1}\left(\psi,\kappa_2,\lambda_1\right)\right]\right\}
\end{split}
\end{equation}
where $\psi\triangleq \xi + P_s^{-1}\sigma_{{\rm E}_1}^2$,
\begin{subequations}\label{Eq:Is_Special_Case}
\begin{equation}
I_{1,0}\left(\xi,\kappa_n,0\right) = -\exp\left(\xi\kappa_n\right){\rm Ei}\left(-\xi\kappa_n\right),
\end{equation}
\begin{equation}
I_{2,0}\left(\xi,\kappa_n,0\right) = \kappa_n^{-1}+\xi\exp\left(\xi\kappa_n\right){\rm Ei}\left(-\xi\kappa_n\right),
\end{equation}
and
\begin{equation}
I_{1,1}\left(\psi,\kappa_n,\lambda_n\right) = \frac{I_{1,0}\left(\psi,\kappa_n\right)-I_{1,0}\left(\psi,\lambda_n\right)}{\lambda_n-\kappa_n},
\end{equation}
\begin{equation}
I_{2,1}\left(\psi,\kappa_n,\lambda_n\right) = \frac{I_{1,0}\left(\psi,\lambda_n\right)-I_{1,0}\left(\psi,\kappa_n\right)}{\left(\kappa_n-\lambda_n\right)^2}-\frac{I_{2,0}\left(\psi,\kappa_n\right)}{\kappa_n-\lambda_n}.
\end{equation}
\end{subequations}
\begin{table*}[t]
\small
  \centering
\begin{tabular}{!{\VRule[0.7pt]}c!{\VRule[0.7pt]}c|c|c|c!{\VRule[0.7pt]}c|c|c|c!{\VRule[0.7pt]}}
 \specialrule{0.7pt}{0pt}{0pt}
   & \multicolumn{4}{c!{\VRule[0.7pt]}}{
                                    \textbf{ Algorithm B}
                                  } & \multicolumn{4}{c!{\VRule[0.7pt]}}{
                                     \textbf{Algorithm A}
                                  } \\
  \cline{2-9}
  Channels & $P_{1}$ & $P_{2}$ & $P_{3}$ & \begin{tabular}{c}
                                     Achieved\\
                                    Secrecy Rate \\
                                  \end{tabular} & $P_{1}$ & $P_{2}$ & $P_{3}$ & \begin{tabular}{c}
                                     Achieved\\
                                    Secrecy Rate \\
                                  \end{tabular}\\
  \hline
1 & 1.00  & 0 & 0.50  & 1.62 & 1.00 & 0 & 0.50 & 1.62 \\
  \hline
2 & 1.00  & 0 & 0.17  & 2.98 & 1.00 & 0 & 0.16 & 2.98 \\
  \hline
3 & 0.47  & 0 & 0.35  & 1.68 & 0.46 & 0 & 0.34 & 1.68 \\
  \hline
4 & 1.00  & 0.43 & 0  & 2.72 & 1.00 & 0.42 & 0  & 2.73 \\
  \hline
5 & 0  & 0.28 & 0.31  & 1.09 & 0 & 0.28 & 0.31 & 1.09\\
 \specialrule{0.7pt}{0pt}{0pt}
\end{tabular}
\label{Table:power_allocations}
\caption{ The optimal power allocation at the jammers based on Algorithm A and Algorithm B, for different sets of wireless channels.}
\end{table*}
\section{Numerical Results and Discussions}
\indent In order to validate the performance of the proposed algorithms, we consider the secrecy network shown in Fig. 1, with a source-destination pair, three ($N=3$) cooperative jammers and two ($M=2$) eavesdroppers. In the following simulations, all the channel coefficients involved are generated using zero-mean circularly symmetric independent and identically distributed complex Gaussian RVs. In addition, the noise variances at the destination and the eavesdroppers are assumed to be 0.1.\\
\indent To assess the convergence of the proposed secrecy rate maximization algorithm, the available maximum transmit powers at the source and relays have been set to, $P_{s} =2$, $P_{1} =1$, $P_{2} =1$ and $P_{3} =3$. Fig. \ref{fig:algo_convergence} depicts the convergence of the achievable secrecy rates for a set of different feasible channels. As it is evident from this figure, the proposed algorithm converges, while the achievable secrecy rates increase with the iteration number. In addition, it has been observed that the proposed Algorithm A converges to the same secrecy rate, with different initialization of transmit powers at the jammers. However, we could not provide analytical results to prove this convergence. As we discussed in the convergence analysis of the algorithm, it can be observed that the achievable secrecy rate monotonically increases with the iteration number.\\
\indent Next, we compare the performance of the proposed algorithm with the existing scheme in \cite{Gan_CJ_Sig_Process_J11} and the best jammer selection scheme. The cooperative jamming scheme in \cite{Gan_CJ_Sig_Process_J11} has been developed using both convex optimization approach and one dimensional search scheme in the presence of a single eavesdropper whereas the best jammer is selected from available cooperative jammers in the best jammer selection scheme. In order to evaluate this comparison, the same secrecy network in the previous simulation is considered with a single eavesdropper and with the same noise variance 0.1 at all the nodes. Fig.~\ref{fig:perfor_comparision} depicts the achieved secrecy rates for different available transmit power at the source and the cooperative jammers for different sets of channels, where it is assumed that the maximum available transmit power at the source and the cooperative jammers are the same. As seen in Fig.~\ref{fig:perfor_comparision}, both the proposed algorithm and the scheme in \cite{Gan_CJ_Sig_Process_J11} achieve the same secrecy rates for different sets of channels with the same transmit power constraints and better secrecy rates than the best jammer selection scheme. This confirms that the proposed algorithm shows the same performance as the optimal scheme in \cite{Gan_CJ_Sig_Process_J11} and outperforms the best jammer selection scheme.\\
\indent Next, we evaluate the optimality of the power allocation obtained through the proposed Algorithm  A. In order to do this, we simulate Algorithm  B for the same set of channels considered for Algorithm A. Table I presents the power allocation and the secrecy rates obtained through Algorithm B that is based on one-dimensional search and on the Algorithm A. As we can conclude from this table, the power allocation and achieved secrecy rates are identical for different sets of channels in both algorithms. Note that there are small differences in the power allocation and achieved secrecy rates, due to the accuracy or precision of software used. However, these results provided in Table I confirm the optimality of the proposed secrecy rate maximization Algorithm A.\\
\indent By numerically evaluating \eqref{Eq:SOP_Closed_Form_M1N2}, Fig.~\ref{Fig:SOP_M1N2_vsRate} depicts SOP performance as a function of rate $\mathcal{R}$ in bps per Hertz for $N=2$ cooperative jammers, $M=1$ eavesdropper and various power levels. It is shown in this figure that computer simulation results for SOP match perfectly with the equivalent numerical ones, for all considered parameters. As expected, SOP degrades with increasing values for $\mathcal{R}$. In addition, as the transmit power of the source $S$ increases and the transmit powers at the two cooperative jammers ${\rm J}_1$ and ${\rm J}_2$ decrease, SOP improves. The best SOP performance in this figure for all considered $\mathcal{R}$ values is achieved with $P_s=15$ dB, $P_1=0$ dB and $P_2=2$ dB, and the lower value for SOP is $0.5$.\\
\indent The SOP performance as a function of source $S$'s transmit power $P_s$ is illustrated in Fig.~\ref{Fig:SOP_variousMN_vsPs}. The following transmission scenarios have been considered: \textit{i}) Scenario $1$: $\mathcal{R}=1$, $P_1=-4$ dB and $P_i=P_1+(i-1)$ dB with $i=2,3$ and $4$; and \textit{ii}) Scenario $2$:  $\mathcal{R}=0.01$, $P_1=1$ dB and $P_i=P_1+(i-1)$ dB with $i=2,3$ and $4$. For the SOP results, the single-integral expression given by \eqref{Eq:SOP_Integral_Form} after substituting \eqref{Eq:Y_integral} and the closed-form expression given by
\eqref{Eq:SOP_Closed_Form_General}  for arbitrary values of $N$ and $M$ as well as the closed-form expression given by \eqref{Eq:SOP_Closed_Form_M1N2} for $M=2$ and $N=1$ have been numerically evaluated. As clearly shown, computer simulation results for SOP coincide with the numerical ones, for all considered parameters. Furthermore, it is evident that, for the same values of $N$ and $M$, the SOP performance of Scenario $2$ is always better than that of Scenario $1$. In both scenarios, the minimum SOP is accomplished with $N=M=1$ and the maximum with $N=M=4$. Also, as expected, SOP improves with increasing values of $P_s$ for all considered cases. In addition, it is shown in this figure that, as $M$ increases while $N$ is kept constant, SOP degrades significantly. This performance degradation can be confronted for some range of $P_s$ values by increasing $N$. However, increasing $N$ introduces a SOP performance penalty, that needs to be taken under consideration when designing a cooperating jamming scheme.
\begin{figure}[!t]
\centering \includegraphics[scale = 0.58]{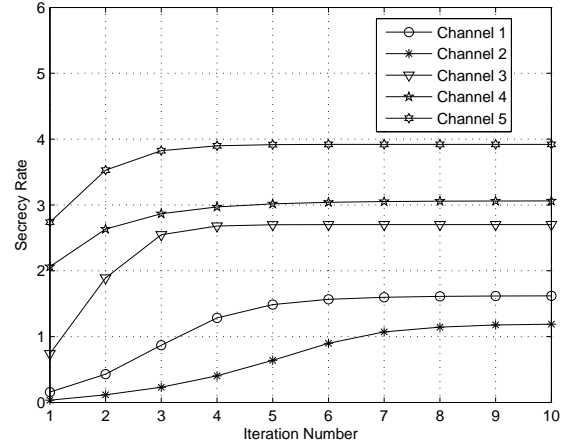}
\caption{The convergence of the proposed secrecy rate maximization Algorithm A, for different sets of wireless channels.}\label{fig:algo_convergence}
\end{figure}
\begin{figure}[!t]
\centering \includegraphics[scale = 0.62]{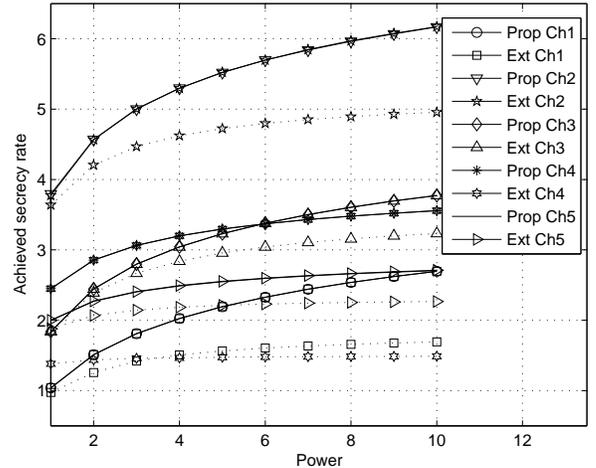}
\caption{The achieved secrecy rates of Algorithm A, the scheme in \cite{Gan_CJ_Sig_Process_J11} and best jammer selection scheme for five sets of different wireless channels with different maximum available transmit power. The dotted lines denote the best jammer selection scheme.}\label{fig:perfor_comparision}
\end{figure}
\begin{figure}[!t]
\centering
\includegraphics[scale = 0.45]{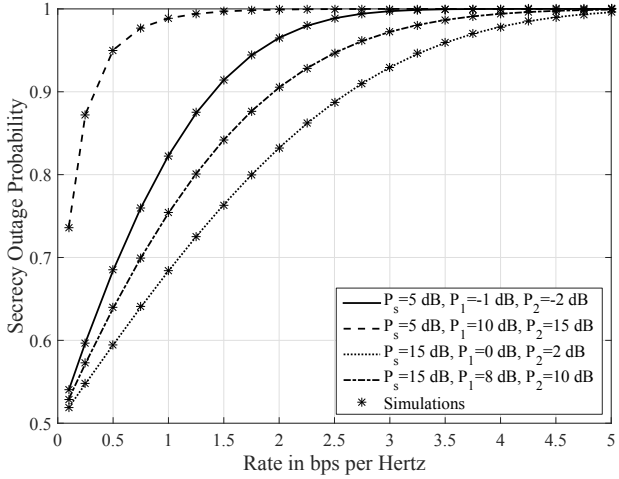}
\caption{${\rm P_{out}}$, as a function of the rate, $\mathcal{R}$, in bps per Hz, for $N=2$ cooperative jammers, $M=1$ eavesdroppers and various power levels.}
\label{Fig:SOP_M1N2_vsRate}
\end{figure}
\begin{figure}[!t]
\centering
\includegraphics[scale = 0.45]{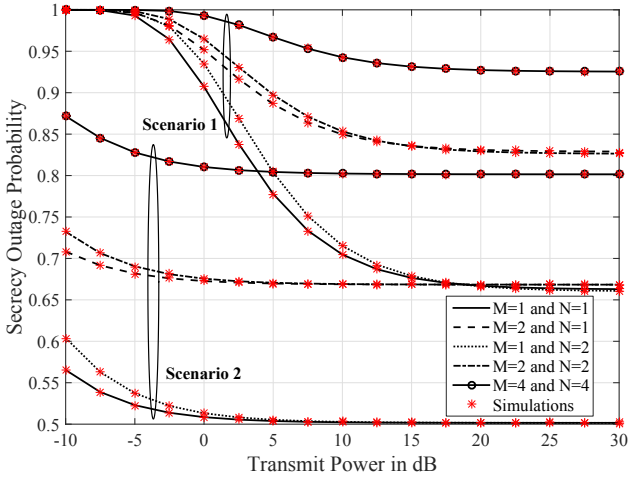}
\caption{SOP performance, ${\rm P_{out}}$, as a function of source's transmit power, $P_s$, in dB for both Scenarios $1$ and $2$ as well as various numbers of cooperative jammers and eavesdroppers.}
\label{Fig:SOP_variousMN_vsPs}
\end{figure}
\section{Conclusions}
\indent In this paper, we studied the power allocation problem of secrecy rate maximization with cooperative jammers, in the presence of multiple eavesdroppers. For this problem, a feasibility condition was first derived for power allocation in order to achieve positive secrecy rate. Then, the original non-convex secrecy rate maximization problem was solved to obtain the optimal power allocation at the jammers. The proposed optimal iterative approach was developed by approximating the secrecy rate function and formulating the corresponding problem into a geometric programming problem for a given set of power allocation at the jammers. In order to validate the optimality of the developed algorithm, we also developed an one-dimensional search algorithm based on bisection. In addition, the SOP analysis of the proposed cooperative jamming approach was derived for Rayleigh fading channels. Simulation results were provided to validate the optimality and convergence of the proposed algorithm as well as the theoretical derivation of SOP analysis. These results confirm that the proposed algorithm yields the optimal power allocation at the jammers, whereas the numerical simulation results demonstrate the correctness of theoretical derivations of the SOP analysis.
\appendices
\renewcommand{\theequation}{A.\arabic{equation}}
\setcounter{equation}{0}
\section{Proof of Lemma 1}\label{App:Lemma1}
Function $g(\mathbf{x})$ can be written as
\begin{eqnarray}
   g(\mathbf{x})\!\!\!\!\!&=&\!\!\!\!\!\sum_{k=1}^{K}a_{k}\left[\frac{w_{k}(\mathbf{x})}{a_{k}} \right]\geq\prod_{k=1}^{K}\! \left[\frac{w_{k}(\mathbf{x})}{a_{k}}\right]^{a_{k}}\!\!\!=\!\hat{g}(\mathbf{\hat{x}})\label{eq:GM_AM_inequality}
\end{eqnarray}
where the inequality in \eqref{eq:GM_AM_inequality} is obtained from the arithmetic-geometric mean inequality. This inequality holds with equality when $a_{k}=\frac{w_{k}(\mathbf{\hat{x}})}{g(\mathbf{\hat{x}})}$ as follows:
\begin{eqnarray}
  \hat{g}(\mathbf{\hat{x}})\!\!\!&=&\!\!\!\prod_{k=1}^{K} \left[\frac{w_{k}(\mathbf{\hat{x}})}{\bar{a}_{k}}\right]^{\bar{a}_{k}}\nonumber\\
   \!\!\!&=&\!\!\!\prod_{k=1}^{K} g(\mathbf{\hat{x}})^{\sum_{k=1}^{K}\frac{w_{k}(\mathbf{\hat{x}})}{g(\mathbf{\hat{x}})}} = g(\mathbf{\hat{x}})\nonumber
\end{eqnarray}
where
\begin{eqnarray}
   \bar{a}_{k}\!\!\!&=&\!\!\!\frac{w_{k}(\mathbf{\hat{x}})}{g(\mathbf{\hat{x}})}~\textrm{and}~\sum_{k=1}^{K}\frac{w_{k}(\mathbf{\hat{x}})}{g(\mathbf{\hat{x}})} = 1.
\end{eqnarray}
This completes the proof of Lemma 1.
\renewcommand{\theequation}{B.\arabic{equation}}
\setcounter{equation}{0}
\section{Derivation of \eqref{Eq:SOP_Elaborations}}\label{App:Derivation_SOP}
Starting from \eqref{Eq:SOP_Definition} and using the definition of conditional probability results in
\begin{equation}\label{Eq:SOP_Initial_Elaborations}
\begin{split}
{\rm P_{out}} &= 1+{\rm Pr}\left[\frac{\gamma_{\rm D}}{\mu}+\nu<\gamma_{\rm E_{max}}<\gamma_{\rm D}\right]-{\rm Pr}\left[\gamma_{\rm E_{max}}<\gamma_{\rm D}\right]
\\&=1-\int_0^\infty F_{\gamma_{\rm E_{max}}}\left(\frac{y}{\mu}+\nu\right)f_{\gamma_{\rm D}}(y){\rm d}y.
\end{split}
\end{equation}
By using the change of variables $x\rightarrow y/\mu+\nu$ and the fact that $\nu<0$, yields \eqref{Eq:SOP_Elaborations}.

\renewcommand{\theequation}{C.\arabic{equation}}
\setcounter{equation}{0}
\section{Closed-Form Solution for \eqref{Eq:Y_integral}}\label{App:Theorem_1}
To solve integral $Y$ given by \eqref{Eq:Y_integral} that appears in the SOP expression given by \eqref{Eq:SOP_Integral_Form}, we first make use of the multinomial expansion \cite[eq$.$ (23)]{J:Alexandg_Comparative} for the $M$-factor product, yielding
\begin{equation}\label{Eq:Multinomial_Expansion}
\begin{split}
&\prod_{i=1}^M\left[1-P_s\exp\left(-\frac{\sigma_{{\rm E}_i}^2x}{P_s}\right)\sum_{n=1}^{N}\frac{\mathcal{A}_n}{x+\lambda_n}\right]
\\& = 1 + \sum_{\{\alpha_i\}_{i=1}^M}P_s^i\exp\left(-\frac{x}{P_s}\sum_{j=1}^{i}\sigma_{{\rm E}_{\alpha_j}}^2\right)\left(\sum_{n=1}^{N}\frac{\mathcal{A}_n}{x+\lambda_n}\right)^i.
\end{split}
\end{equation}
Then, in the latter expression, we utilize the multinomial theorem to expand the $i^{\rm{th}}$ power of the $N$-term sum as follows
\begin{equation}\label{Eq:Multinomial_Theorem}
\begin{split}
\left(\sum_{n=1}^{N}\frac{\mathcal{A}_n}{x+\lambda_n}\right)^i = \sum_{k_1+k_2+\cdots+k_N=i}&\frac{i!}{\prod_{n=1}^{N}k_n!}
\\&\times\prod_{t=1}^{N}\frac{\mathcal{A}_t^{k_t}}{(x+\lambda_t)^{k_t}}.
\end{split}
\end{equation}
Substituting \eqref{Eq:Multinomial_Theorem} into \eqref{Eq:Multinomial_Expansion} and then into \eqref{Eq:Y_integral}, integral $Y$ can be rewritten as
\begin{equation}\label{Eq:Y_integral_Expanded}
\begin{split}
 Y =& \sum_{n=1}^{N}\frac{\mathcal{A}_n}{\mu}\left[\sigma_{{\rm D}}^2I_{1,0}\left(\xi,\kappa_n,0\right)\frac{P_s}{\mu}+I_{2,0}\left(\xi,\kappa_n,0\right)\right]
\\&+\sum_{\{\alpha_i\}_{i=1}^M}P_s^i \sum_{k_1+k_2+\cdots+k_N=i}\frac{i!\prod_{t=1}^{N}\mathcal{A}_t^{k_t}}{\prod_{n=1}^{N}k_n!}
\\&\times\sum_{n=1}^{N}\frac{\mathcal{A}_n}{\mu}\left[\sigma_{{\rm D}}^2 I_{1,\{k_n\}_{n=1}^N}\left(\psi_i,\kappa_n,\{\lambda_n\}_{n=1}^N\right)\right.
 \\&\left.+\frac{P_s}{\mu}I_{2,\{k_n\}_{n=1}^N}\left(\psi_i,\kappa_n,\{\lambda_n\}_{n=1}^N\right)\right]
\end{split}
\end{equation}
where $\psi_i\triangleq\xi+P_s^{-1}\sum_{j=1}^{i}\sigma_{{\rm E}_{\alpha_j}}^2$. In addition, $I_{\ell,\{k_n\}_{n=1}^N}\left(\alpha_1,\alpha_2,\left\{\alpha_{3,n}\right\}_{n=1}^{N}\right)$ for $\ell=1,2$ as well as for $k_n$ being positive integer and $\alpha_1$, $\alpha_2$, $\alpha_{3,n} $ $\in\mathbb{R}_+^*$ $\forall$~$n=1,2,\ldots,N$ is defined as
\begin{equation}\label{Eq:Basic_Integral}
\begin{split}
&I_{\ell,\{k_n\}_{n=1}^N}\left(\alpha_1,\alpha_2,\left\{\alpha_{3,n}\right\}_{n=1}^{N}\right)
\\&= \int_0^\infty \frac{\exp\left(-\alpha_1 x\right)}{\left(x+\alpha_2\right)^\ell\prod_{n=1}^N\left(x+\alpha_{3,n}\right)^{k_n}}{\rm d}x.
\end{split}
\end{equation}
By using \cite[Sec$.$~2.1]{B:Gra_Ryz_Book} for the rational function integrand in \eqref{Eq:Basic_Integral} in order to rewrite the integral as summations of integrals, it can be shown that
\begin{equation}\label{Eq:Basic_Integral}
\begin{split}
&I_{\ell,\{k_n\}_{n=1}^N}\left(\alpha_1,\alpha_2,\left\{\alpha_{3,n}\right\}_{n=1}^{N}\right) = \sum_{i=1}^\ell Z_i\int_0^\infty\frac{\exp\left(-\alpha_1 x\right)}{\left(x+\alpha_2\right)^i}{\rm d}x
\\&+ \sum_{j=1}^N \sum_{i=1}^{k_j} \Theta_i^{(k_j)}\int_0^\infty\frac{\exp\left(-\alpha_1 x\right)}{\left(x+\alpha_{3,j}\right)^i}{\rm d}x
\end{split}
\end{equation}
where the real-valued parameter $Z_i$ is given by
\begin{equation}\label{Eq:Z}
Z_{\ell-k+1} = \frac{1}{(k-1)!}\frac{{\rm d}^{k-1}}{{\rm d}x^{k-1}} \zeta(x)\Big|_{x=-\alpha_2}
\end{equation}
for $k\leq\ell$ with $\zeta(x)=\prod_{n=1}^N \left(x+\lambda_n\right)^{-k_n}$, and the real-valued parameter $\Theta_i^{(k_j)}$ by
\begin{equation}\label{Eq:Theta}
\Theta_{k_j-k+1}^{(k_j)} = \frac{1}{(k-1)!}\frac{{\rm d}^{k-1}}{{\rm d}x^{k-1}} \theta_j(x)\Big|_{x=-\lambda_j}
\end{equation}
for $k\leq k_j$ with $\theta_j(x)=\left(x+\alpha_2\right)^{-1}\prod_{n\neq j}^N \left(x+\lambda_n\right)^{-k_n}$. By making use of \cite[eq. (3.353/2)]{B:Gra_Ryz_Book} for the integrals appearing in \eqref{Eq:Basic_Integral}, yields
\begin{equation}\label{Eq:Basic_Integral_Solution}
\begin{split}
&I_{\ell,\{k_n\}_{n=1}^N}\left(\alpha_1,\alpha_2,\left\{\alpha_{3,n}\right\}_{n=1}^{N}\right) = \sum_{i=1}^\ell \frac{Z_i}{(i-1)!}\sum_{r=1}^{i-1}(r-1)!
\\&\times\left(-\alpha_1\right)^{i-r-1}\alpha_2^{-r}-\frac{\left(-\alpha_1\right)^{i-1}}{(i-1)!}\exp\left(\alpha_1\alpha_2\right){\rm Ei}\left(-\alpha_1\alpha_2\right)
\\&+ \sum_{j=1}^N \sum_{i=1}^{k_j} \frac{\Theta_i^{(k_j)}}{(i-1)!}\sum_{r=1}^{i-1}(r-1)!\left(-\alpha_1\right)^{i-r-1}\alpha_{3,j}^{-r}-\frac{\left(-\alpha_1\right)^{i-1}}{(i-1)!}
\\&\times\exp\left(\alpha_1\alpha_{3,j}\right){\rm Ei}\left(-\alpha_1\alpha_{3,j}\right).
\end{split}
\end{equation}
Finally, by replacing \eqref{Eq:Basic_Integral_Solution} in \eqref{Eq:Y_integral_Expanded} yields a closed-form expression for integral $Y$.
\bibliographystyle{ieeetr}
\linespread{1}
\bibliography{Myjournals,myrefs,mybooks}

\begin{thebibliography}{10}

\bibitem{Renzo_Commun_Mag_J15}
N.~Yang, L.~Wang, G.~Geraci, M.~Elkashlan, J.~Yuan, and M.~D. Renzo,
  ``{Safeguarding 5G wireless communication networks using physical layer
  security},'' {\em IEEE Commun. Mag.}, vol.~53, no.~4, pp.~20--27, Apr.~2015.

\bibitem{Zeng_Commun_Mag_J15}
K.~Zeng, ``{Physical layer key generation in wireless networks: Challenges and
  opportunities},'' {\em IEEE Commun. Mag.}, vol.~53, no.~6, pp.~33--39,
  Jun.~2015.

\bibitem{Wyner_J75}
A.~D. Wyner, ``The wire-tap channel,'' {\em {Bell Syst. Tech. Journ.}},
  vol.~54, pp.~1355--1387, Jan. 1975.

\bibitem{Korner_Info_Theory_J78}
{I.~Csisz\'{a}r and J.~K\"{o}rner}, ``Broadcast channels with confidential
  messages,'' {\em IEEE Trans. Inf. Theory}, vol.~24, pp.~339--348, May 1978.

\bibitem{Maged_J15}
N.~Yang, M.~Elkashlan, T.~Q. Duong, J.~Yuan, and R.~Malaney, ``{Optimal
  transmission with artificial noise in MISOME wiretap channels},'' {\em IEEE
  Trans. Vehicular Technol.}, accepted for publication.

\bibitem{Duong_Comm_Lett_J14}
L.~Wang, M.~Elkashlan, J.~Huang, N.~Tran, and T.~Duong, ``Secure transmission
  with optimal power allocation in untrusted relay networks,'' {\em IEEE
  Commun. Lett.}, vol.~3, no.~3, pp.~289--292, Jun. 2014.

\bibitem{Caijun_Commun_J15}
F.~Al-Qahtani, C.~Zhong, and H.~Alnuweiri, ``Opportunistic relay selection for
  secrecy enhancement in cooperative networks,'' {\em IEEE Trans. Commun.},
  vol.~63, no.~5, pp.~1756--1770, May 2015.

\bibitem{George_KK_TVT_J15}
J.~Zhu, Y.~Zou, G.~Wang, Y.~Yao, and G.~K. Karagiannidis, ``{On secrecy
  performance of antenna selection aided MIMO systems against eavesdropping},''
  {\em IEEE Trans. Vehicular Technol.}, accepted for publication.

\bibitem{Zou_JSAC_J13}
Y.~Zou, X.~Wang, and W.~Shen, ``Optimal relay selection for physical layer
  security in cooperative wireless networks,'' {\em IEEE J. Sel. Areas
  Commun.}, vol.~31, no.~10, pp.~2099--2111, Jun.~2015.

\bibitem{Cuma_TVT_J14}
K.~Cumanan, Z.~Ding, B.~Sharif, G.~Y. Tian, and K.~K. Leung, ``{Secrecy rate
  optimizations for a MIMO secrecy channel with a multiple-antenna
  eavesdropper},'' {\em IEEE Trans. Vehicular Technol.}, vol.~63, no.~4,
  pp.~1678--1690, May, 2014.

\bibitem{Zheng_TVT_J15}
Z.~Chu, K.~Cumanan, Z.~Ding, M.~Johnston, and Y.~S. Le~Goff, ``{Secrecy rate
  optimizations for a MIMO secrecy channel with a cooperative jammer},'' {\em
  IEEE Trans. Vehicular Technol.}, vol.~64, no.~5, pp.~1833--1847, May, 2015.

\bibitem{Cuma_JSTSP_J16}
K.~Cumanan, Z.~Ding, M.~Xu, and H.~V. Poor, ``{Secrecy rate optimizations for
  secure multicast communications},'' {\em IEEE J. Sel. Topics Signal
  Process.}, vol.~10, no.~8, pp.~1417--11432, Dec., 2016.

\bibitem{Wei_Chen_Wireless_Lett_J15}
W.~Xiang, S.~L. Goff, M.~Johnston, and K.~Cumanan, ``Signal mapping for
  bit-interleaved coded modulation schemes to achieve secure communications,''
  {\em IEEE Wireless Commun. Lett.}, vol.~4, no.~3, pp.~249--252, Jun.~2015.

\bibitem{Zheng_Wireless_Lett_J15}
Z.~Chu, K.~Cumanan, Z.~Ding, M.~Johnston, and S.~L. Goff, ``{Robust outage
  secrecy rate optimizations for a MIMO secrecy channel},'' {\em IEEE Wireless
  Commun. Lett.}, vol.~4, no.~1, pp.~86--89, Feb.~2015.

\bibitem{Weber_Sig_Process_J11}
J.~Li, A.~P. Petropulu, and S.~Weber, ``On cooperative relaying schemes for
  wireless physical layer security,'' {\em IEEE Trans. Signal Process.},
  vol.~59, no.~10, pp.~4985--4997, Oct. 2011.

\bibitem{Poor_Sig_Process_J10}
L.~Dong, Z.~Han, A.~P. Petropulu, and H.~V. Poor, ``Improving wireless physical
  layer security via cooperating relays,'' {\em IEEE Trans. Signal Process.},
  vol.~58, no.~3, pp.~1875--1888, Mar. 2010.

\bibitem{Ding_Wireless_J11}
Z.~Ding, K.~K. Leung, D.~L. Goeckel, and D.~Towsley, ``Opportunistic relaying
  for secrecy communications: Cooperative jamming vs relay chatting,'' {\em
  IEEE Trans. Wireless Commun.}, vol.~29, no.~10, pp.~2067--2076, Jun. 2011.

\bibitem{Petropulu_Info_security_J13}
S.~Luo, J.~Li, and A.~P. Petropulu, ``Uncoordinated cooperative jamming for
  secret communications,'' {\em IEEE Trans. Inf. Forensics Security}, vol.~8,
  no.~7, pp.~1081--1090, Jul.~2013.

\bibitem{Gan_CJ_Sig_Process_J11}
G.~Zheng, L.~C. Choo, and K.~K. Wong, ``Optimal cooperative jamming to enhance
  physical layer security using relays,'' {\em IEEE Trans. Signal Process.},
  vol.~59, no.~3, pp.~1317--1322, Mar. 2011.

\bibitem{C:Secrecy_Multiuser}
Y.~Zou, J.~Zhu, G.~Wang, and H.~Shao, ``Secrecy outage probability analysis of
  multi-user multi-eavesdropper wireless systems,'' in {\em Proc. IEEE
  International Conference on Communications in China (ICCC), Shanghai, China},
  pp.~309--313, Oct. 2014.

\bibitem{Petropulu_SSP12}
S.~Luo, J.~Li, and A.~Petropulu, ``Outage constrained secrecy rate maximization
  using cooperative jamming,'' in {\em Proc. IEEE Statistical Signal Processing
  (SSP) Workshop, Ann Arbor, MI, USA}, pp.~389--392, Aug. 2012.

\bibitem{J:Maged_Mallik}
L.~Wang, M.~Elkashlan, J.~Huang, R.~Schober, and R.~K. Mallik, ``Secure
  transmission with antenna selection in {MIMO} {N}akagami-$m$ fading
  channels,'' {\em IEEE Trans. Wireless Commun.}, vol.~13, pp.~6054--6067, Nov
  2014.

\bibitem{B:Gra_Ryz_Book}
I.~S. Gradshteyn and I.~M. Ryzhik, {\em Table of Integrals, Series, and
  Products}.
\newblock New York: Academic, 6~ed., 2000.

\bibitem{boyd_B04}
S.~Boyd and L.~Vandenberghe, {\em Convex Optimization}.
\newblock Cambridge, UK: Cambridge University Press, 2004.

\bibitem{Ye_B97}
Y.~Ye, {\em Interior Point Algorithms. Theory and Analysis.}
\newblock {New York, John Wiley \& Sons}, 1997.

\bibitem{Julian_Wireless_Commun_J07}
M.~Chiang, C.~W. Tan, D.~P. Palomar, D.~O'Neill, and D.~Julian, ``Power control
  by geometric programming,'' {\em IEEE Trans. Wireless Commun.}, vol.~6,
  no.~7, pp.~2640--2651, Jul.~2007.

\bibitem{Wright1978convex_approx}
B.~R. Marks and G.~P. Wright, ``A general inner approximation algorithm for
  nonconvex mathematical programs,'' {\em Operations Research}, vol.~26, no.~4,
  pp.~681--683, 1978.

\bibitem{Eldar_Signal_Process_Lett_J06}
A.~Wiesel, C.~Y. Eldar, and A.~Beck, ``{Maximum likelihood estimation in linear
  models with a Gaussian model matrix},'' {\em IEEE Signal Process. Lett.},
  vol.~13, no.~5, pp.~292--295, May 2006.

\bibitem{C:Barros_Secrecy_2006}
J.~Barros and M.~R.~D. Rodrigues, ``Secrecy capacity of wireless channels,'' in
  {\em Proc. IEEE International Symposium on Information Theory (ISIT)},
  (Seattle, USA), pp.~356--360, 14-18 Jul. 2006.

\bibitem{B:Sim_Alou_Book}
M.~K. Simon and M.-S. Alouini, {\em Digital Communication over Fading
  Channels}.
\newblock New York: Wiley, 2~ed., 2005.

\bibitem{J:Mats_GenSumChi}
D.~Hammarwall, M.~Bengtsson, and B.~Ottersten, ``Acquiring partial csi for
  spatially selective transmission by instantaneous channel norm feedback,''
  {\em IEEE Trans. Signal Process.}, vol.~56, pp.~1188--1204, Mar 2008.

\bibitem{J:Alexandg_Comparative}
G.~C. Alexandropoulos, A.~Papadogiannis, and P.~C. Sofotasios, ``A comparative
  study of relaying schemes with decode and forward over {N}akagami-$m$ fading
  channels,'' {\em HINDAWI Journal of Comp. Net. and Commun.}, Article ID
  560528, 14 pages, 2011.

\end{thebibliography}
\balance
\end{document}